\theoremstyle{remark}
\DeclarePairedDelimiter{\norm}{\lVert}{\rVert}
\theoremstyle{definition}
\newtheorem{theorem}{Theorem}
\newtheorem{lemma}{Lemma}
\begin{document}
\normalem

\title{Refinement of the Bousso-Engelhardt Area Law}
\author{Fabio Sanches}
\email{fabios@berkeley.edu}
\author{Sean J. Weinberg}
\email{sjweinberg@berkeley.edu}
\affiliation{Department of Physics, University of California, Berkeley, CA 94720, USA}

\bibliographystyle{utcaps}

\begin{abstract}

Past holographic screens are codimension-one surfaces of indefinite signature that are foliated by marginally anti-trapped surfaces called leaves.  Future holographic screens are defined similarly except with marginally trapped leaves.  Bousso and Engelhardt recently showed that the leaves of past and future holographic screens have monotonic area.  We prove a stronger area law that shows that subregions of leaves also have monotonic area.  For every past and future holographic screen, there exists a family of leaf-orthogonal curves called the fibration of the screen.  Any region in a leaf can be translated along the fibration to a leaf of larger area.  Our result states that the area of the subregion grows as it is translated.

\end{abstract}

\maketitle

\subsection{Introduction}

Black hole thermodynamics \cite{Hawking:1971tu,Bekenstein:1972tm,Bardeen:1973gs,Bekenstein:1974ax,Bekenstein:1973ur,Hawking:1974rv,Hawking:1974sw} is a critical principle that has guided the development of quantum gravity over the past few decades. In particular, Hawking's area theorem displayed parallels between the area of the event horizon of a black hole and entropy.  This identification of entropy with area is the heart of the holographic behavior \cite{'tHooft:1993gx,Susskind:1994vu} exhibited by gravity. 

Recently, Bousso and Engelhardt \cite{Bousso:2015mqa,Bousso:2015qqa}  proved an area law for surfaces called past and future holographic screens that arise in a more general setting than the spacetimes of black holes. These objects are not defined by the global notion of an event horizon and thus provide an example of ``quasi-locally'' defined surfaces with thermodynamic behavior.  

Holographic screens are well-motivated from considerations in quantum gravity.  The covariant entropy bound suggests   \cite{Bousso:1999xy,Bousso:1999cb} that holographic screens play a role in general spacetimes that is analogous to the AdS boundary\footnote{Ref. \cite{Nomura:2013nya} studies a related construction.
}
 in the context of the AdS/CFT correspondence  \cite{Maldacena:1997re,Witten:1998qj}.  This hypothesis is supported by the recent demonstration that holographic entanglement entropy \cite{Ryu2006,Hubeny2007} can be defined for regions on past and future holographic screens in a way that is consistent with many known properties of entanglement entropy \cite{Sanches:2016sxy}.

Below we show that the Bousso-Engelhardt area law can be refined into a more local form.  The original area law of \cite{Bousso:2015mqa,Bousso:2015qqa} states that preferred codimension-2 surfaces called leaves have monotonic area.  We show that arbitrary subregions of leaves also have monotonic area.  From the point of view of the holographic principle, this provides evidence that degrees of freedom of a holographic description for arbitrary spacetimes are locally distributed and satisfy a local version of the second law of thermodynamics.


\subsubsection { Holographic Screens and Area Laws  }

Fix a globally hyperbolic spacetime of dimension $D$ satisfying the genericity conditions stated in \cite{Bousso:2015qqa}. A \textit{past holographic screen}  is a codimension-1 submanifold $H$ of the spacetime that is foliated by marginally anti-trapped surfaces called \textit{leaves}. The foliation into leaves is unique: other splittings of $H$  cannot satisfy the marginally anti-trapped condition.  A \textit{future holographic screen} is instead foliated by marginally trapped surfaces. 


The area law of \cite{Bousso:2015mqa,Bousso:2015qqa}  is a statement about the evolution of leaves comprising a past or future holographic screen $H$.  We denote the leaves of $H$ by $\sigma_r$ where $r$ is a smooth parameter. In our notation, we can express the Bousso-Engelhardt area law as the statement that $\norm{\sigma_r}$ is monotonic where $\norm{\cdot}$ denotes the area functional.  By convention, we will always choose the parameter $r$ so that $\norm{\sigma_r}$ is increasing.

On a particular leaf $\sigma$, let $k$ and $l$ denote the two future-directed null vector fields orthogonal $\sigma$.  The condition that $\sigma$ be marginally anti-trapped or marginally trapped can be written in terms of the null expansions $\theta^k$ and $\theta^l$ in the two directions:

\begin{equation}
\label{leafdef}
\begin{aligned}[l]
\begin{split}
\textnormal{\underline{\small{Marginally Anti-Trapped}}} \\
\theta^k = 0 \ \ \ \ \ \ \ \ \ \ \ \ \  \\
 \theta^l > 0 \ \ \ \ \ \ \ \ \ \ \ \ \ 
 \end{split}
\end{aligned}
\quad
\begin{aligned}[r]
\begin{split}
\textnormal{\underline{\small{Marginally Trapped}}} \\
\theta^k = 0 \ \ \ \ \ \ \ \ \ \ \\
 \theta^l < 0\ \ \ \ \ \ \ \ \ \
 \end{split}
\end{aligned}
\end{equation}
In particular, the marginal condition that $\theta^k=0$ means that $\sigma$ is the area-maximizing surface on the geodesic congruence generated by $k$ and $-k$.

We define a vector field $h$ on $H$ by requiring that $h$ is orthogonal to every leaf and by the normalization condition $dr(h) = 1$.  The integral curves of $h$ are called the \textit{fibration}\footnote{Note that $h$ need not have definite signature.  This is the key distinguishing feature between past (and future) holographic screens and related objects including future outer trapping horizons and  dynamical horizons \cite{Hayward:1993wb,Hayward:1997jp,Ashtekar:2003hk,Ashtekar:2005ez}.  Past and future holographic screens can be regarded as a synthesis such ideas with those of \cite{Bousso:1999xy}.} of $H$.  If we extend the definition of $k$ and $l$ to all of $H$, then $h = \alpha l + \beta k$ where $\alpha$ and $\beta$ are smooth functions on $H$.  The Bousso-Engelhardt area law was proven by showing that $\alpha$ never changes sign from which equation \ref{leafdef} implies that leaves have increasing area. 



Our area law extends this result as follows. Suppose that $A_0$ is a region in $\sigma_0$.  We can translate $A_0$ to a region $A_r$ in $\sigma_r$ by following the fibration from points in $A_0$ to $\sigma_r$.  We will prove that the area of $A_r$ is increasing.  This conclusion relies on the fact that the area increase associated with zig-zagging along $l$ and $k$ is a first order effect in $r$, while the failure of such a zig-zag procedure to follow the fibration is at most a second-order effect. 

%
%
%
%
%
%
%
%

\subsubsection{Relation to the Screen Entanglement Conjecture}  

Holographic entanglement entropy proposals \cite{Ryu2006,Hubeny2007} have recently been conjecturally generalized beyond the context of AdS/CFT by employing past or future holographic screens in arbitrary spacetimes \cite{Sanches:2016sxy}.  The proposed construction is to anchor extremal surfaces to the boundaries of subregions of leaves. The properties of past and future holographic screens are sufficient to ensure that the areas of these extremal surfaces satisfy expected properties of entanglement entropy like strong subadditivity.  The statement that one fourth of the area of such extremal surfaces is in fact the entanglement entropy of a subsystem in a quantum theory holographically defining the spacetime in which the screen lies is called the ``screen entanglement conjecture.''

The area law proven in this paper applies to subregions of leaves, the same objects to which an entanglement entropy-like quantity was assigned in \cite{Sanches:2016sxy}.  Suppose that $A_0$ is a region in the leaf $\sigma_0$ and $A_r$ is the result of translating $A_0$ along the fibration to $\sigma_r$.  Let $S(A_r)$ denote the screen entanglement entropy of $A_r$ as defined above via the extremal surface anchored to $\partial A_r$.  With the exception of cases that are topologically nontrivial, $S(A_r)$ satisfies a ``Page bound'':  $S(A_r) \leq \min(\norm{A_r},\norm{\sigma_r \setminus A_r})$.  Our area law applies to the evolution of the subregions $A_r$ and $A_r^C$ and thus causes the Page bound to become less restrictive whenever $r$ is increased.  This does not prove that $S(A_r)$ increases monotonically.

\subsection{Proof of the Area Law for Subregions}

From here on we will assume that $H$ is a past holographic screen.  Our argument can be modified to the case of a future holographic screen in an obvious way.  Because $H$ is a past screen,
\begin{equation} 
\label{theta_cond}
\begin{split}
\theta^k = 0 \\
 \theta^l > 0.
 \end{split}
\end{equation}
Moreover, we now have $\alpha > 0$ on all of $H$.

To carefully study the evolution of areas of regions in leaves, it is convenient to consider the null surfaces passing through a leaf $\sigma_r$.  First, extend $k$ and $l$ to a tubular neighborhood of $H$  by following along the geodesics generated by $k$ and $l$.  Now let $N_r$ denote the null surface obtained by starting from points on $\sigma_r$ and following the integral curves of $k$ in both the $+k$ and $-k$ directions.  Let $L_r^+$ denote the null surface obtained by starting at $\sigma_r$ and following the integral curves of $l$ only in the $+l$ direction.   

We now fix an (arbitrarily chosen) reference leaf $\sigma_0$.  There exists an $r_0>0$ such that if $0<r<r_0$, it is possible to define a ``zig-zag'' map $f_r:\sigma_0 \to \sigma_r$ as follows.  If $p \in \sigma_0$, follow $L_0^+$ from $p$ along a generator of  $L_0^+$ (i.e. along the integral curve of $l$ that $p$ lies on) until $L_0^+$ intersects a generator of $N_r$.  Then, follow the $N_r$ generator to $\sigma_r$.  Bousso and Engelhardt established that $f_r$ is well-defined for sufficiently small $r$ (this is why we restrict to $r<r_0$).  $f_r$ is, in fact, a diffeomorphism between $\sigma_0$ and $\sigma_r$.  

Considering equation \ref{theta_cond} and the fact that $\alpha>0$, the zig-zag construction of $f_r$ implies that if $A_0$ is a $D-2$ dimensional submanifold of $\sigma_0$,
\begin{equation}
\label{fr_area}
\frac{d}{dr}\Big|_{r=0} \norm{f_r(A_0)} = \int_{A_0} \sqrt{g^{\sigma_0}} \: \alpha \: \theta^l \: >0.
\end{equation}
The area law of Bousso and Engelhardt is obtained in the case where $A_0 = \sigma_0$ because $f_r$ is surjective.

Aside from the case where $A_0 = \sigma_0$, the fact that $\norm{f_r(A_0)}$ is an increasing function of $r$ is an unattractive area law.  One issue is that the definition of the function $f_r$ involves the choice of the reference leaf (i.e. the choice of $r = 0$).  Moreover, the family of regions  $\{ f_r(A_0) \: | \: r\in[0,r_0) \}$ cannot necessarily be extended to all $r$.  

Fortunately, as described above, there is a simpler way to carry subregions from one leaf to the next.  Let $A_{r_1}$ be a $D-2$ dimensional submanifold of the leaf $\sigma_{r_1}$. Define $A_{r_2} \subset \sigma_{r_2}$ by starting from points in $A_{r_1}$ and following along the fibration of $H$ (i.e. the integral curves of $h$) by parameter $r_2 - r_1$.  Note that this procedure gives a well-defined region $A_r \subset \sigma_r$ for the entire range of $r$.  We now prove that $\norm{A_r}$ is an increasing function.
%

\begin{figure}
\centering
\includegraphics[width=8cm]{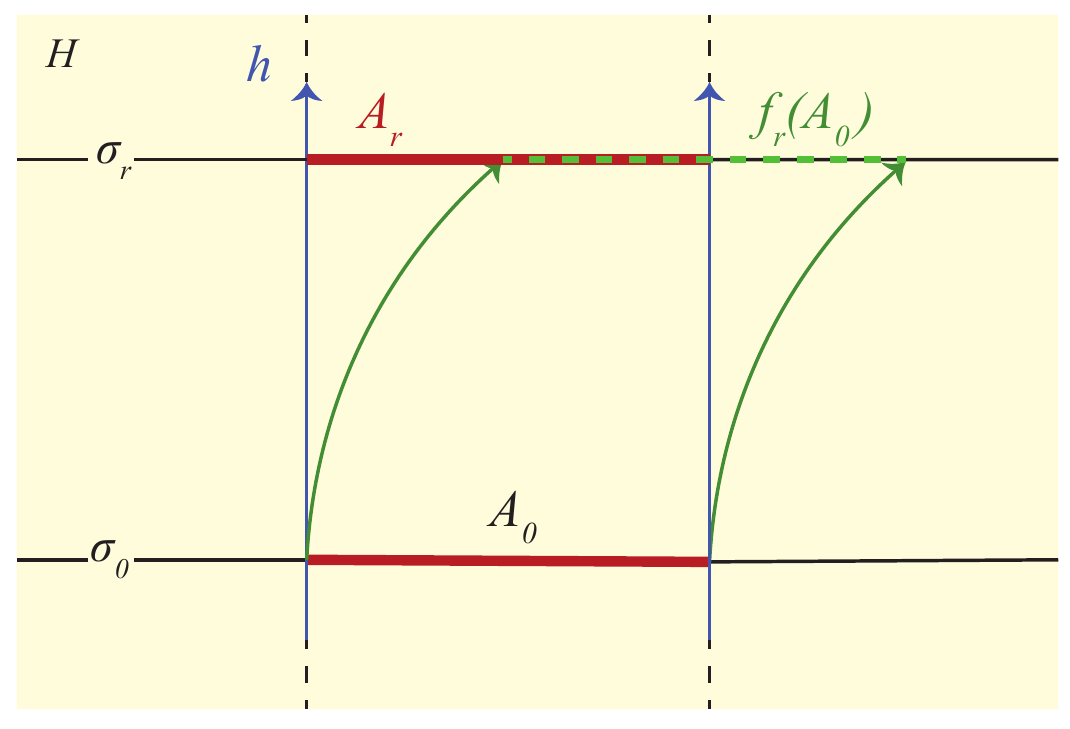}
\caption{We show that $A_r$ has monotonic area by comparing $A_r$ with the region $f_r(A_0)$.  As depicted here, $A_r$ and $f_r(A_0)$ are identical at linear order in $r$.}
\label{comparison}
\end{figure}

First, the following Lemma shows that $f_r$ behaves similarly to $h$-translation for small $r$:
\begin{lemma}
\label{tangent_vec}
If $p_0 \in \sigma_0$, let $\gamma: [0,r_0) \to H$ be the curve on $H$ defined by $\gamma(r) = f_r(p_0)$.  Then, the tangent vector of $\gamma$ at $r=0$ is $h(p_0)$.
\end{lemma}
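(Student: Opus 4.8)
The plan is to compute the tangent vector of $\gamma$ at $r=0$ directly from the zig-zag construction and show it equals $h(p_0) = \alpha l + \beta k$ at $p_0$. Since $\gamma(r) = f_r(p_0)$ is obtained by first flowing from $p_0$ along the generator of $L_0^+$ (i.e.\ along $l$) by some affine amount $a(r)$, and then flowing along a generator of $N_r$ (i.e.\ along $k$) by some affine amount $b(r)$, the curve $\gamma$ is a composition of a null displacement in the $l$ direction and a null displacement in the $k$ direction. At $r=0$ both displacements vanish (since $f_0 = \mathrm{id}$), so $\gamma(0) = p_0$ and $\dot\gamma(0) = a'(0)\, l(p_0) + b'(0)\, k(p_0)$. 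The whole content of the Lemma is therefore to identify $a'(0) = \alpha(p_0)$ and $b'(0) = \beta(p_0)$.

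First I would set up coordinates adapted to the construction: parametrize the generator of $L_0^+$ through $p_0$ by the affine parameter of $l$, and for each leaf $\sigma_r$ parametrize the generator of $N_r$ by the affine parameter of $k$. The defining property of $f_r$ is that the point $\gamma(r)$ lies on $\sigma_r$, i.e.\ it is the image under "flow along $l$ by $a(r)$, then along $k$ by $b(r)$" of $p_0$, landing exactly on the leaf $\sigma_r$. Differentiating the constraint "$\gamma(r) \in \sigma_r$" at $r=0$ gives one scalar equation relating $a'(0)$ and $b'(0)$; intuitively it says $dr(\dot\gamma(0)) = 1$, exactly the normalization $dr(h)=1$. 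The second equation comes from the geometry of the zig-zag: the $l$-leg must reach the surface $N_r$ before the $k$-leg is taken, which pins down $a(r)$ (to leading order) as the affine distance from $\sigma_0$ to $N_r$ along $l$, a quantity whose $r$-derivative at $0$ is governed by how $N_r$ sweeps out the neighborhood — and this is precisely the $l$-component $\alpha$ of $h$. Then $b'(0)$ is forced to be $\beta$ by the first equation together with $h = \alpha l + \beta k$.

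Alternatively, and perhaps more cleanly, I would argue as follows: both $h$-translation and the $f_r$ zig-zag produce curves from $p_0$ that stay on $H$ and cross the leaves at unit $r$-rate, so both tangent vectors $\dot\gamma(0)$ and $h(p_0)$ lie in $T_{p_0}H$ and satisfy $dr(\cdot) = 1$. Their difference $v = \dot\gamma(0) - h(p_0)$ therefore satisfies $dr(v) = 0$, so $v$ is tangent to the leaf $\sigma_0$. But $v$ is also a linear combination of $l(p_0)$ and $k(p_0)$ — since $\dot\gamma(0)$ is (shown above) in $\mathrm{span}\{k,l\}$ and $h$ is by definition in $\mathrm{span}\{k,l\}$ — and the only vector in $\mathrm{span}\{k,l\}$ that is tangent to $\sigma_0$ is the zero vector (as $k,l$ are the two null normals to $\sigma_0$ and hence transverse to it). Hence $v=0$.

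The main obstacle is rigorously establishing that $\dot\gamma(0)$ really has no component tangent to $\sigma_0$ — equivalently, that the $l$-leg and $k$-leg lengths $a(r), b(r)$ are the only freedom and that the zig-zag endpoint depends smoothly on $r$ with $a(0)=b(0)=0$. This requires knowing that $f_r$ is a well-defined diffeomorphism depending smoothly on $r$ down to $r=0$ with $f_0 = \mathrm{id}_{\sigma_0}$; that is exactly the Bousso--Engelhardt result invoked just before the Lemma, so it may be taken as given. Once smoothness and the structure of the two null legs are in hand, the span-and-transversality argument closes the proof with essentially no computation.
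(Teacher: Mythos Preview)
Your proposal is correct, and your second approach is a cleaner packaging of the same underlying idea the paper uses. The paper proceeds by building adapted coordinates $(x^i,r,z)$ on a neighborhood of $p_0$ in which $h=\partial_r$ and $k=\partial_z$ on $H$, so that $\gamma(r)=(x^i(r),r,0)$ and the Lemma reduces to $\dot x^i(0)=0$. It then Taylor-expands the $l$-geodesic from $p_0$ and the $k$-geodesic from $\gamma(r)$ to first order in their affine parameters, imposes the intersection condition, and reads off $\lambda_1(r)=\alpha_0 r+O(r^2)$, $\lambda_2(r)=-\beta_0 r+O(r^2)$, from which $x^i(r)=O(r^2)$ follows by comparing the $x^i$-components. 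What your transversality argument does is isolate the one structural fact that drives that computation: the chain rule applied to $\Phi^k_{b(r)}\circ\Phi^l_{a(r)}(p_0)$ immediately gives $\dot\gamma(0)\in\mathrm{span}\{k(p_0),l(p_0)\}$, and since $\dot\gamma(0)-h(p_0)$ also lies in $T_{p_0}\sigma_0$ (both vectors are tangent to $H$ with $dr=1$), the intersection $\mathrm{span}\{k,l\}\cap T_{p_0}\sigma_0=\{0\}$ finishes the proof without ever solving for $a'(0)$ or $b'(0)$. The paper's computation buys you the explicit affine lengths of the two legs (your $a'(0)=\alpha$, $b'(0)=\beta$), which your first sketch gestures at but does not pin down; your second argument shows those values are not actually needed for the Lemma. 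Both routes rely on the smoothness of $f_r$ in $r$ (so that $a,b$ are differentiable at $0$), which you correctly attribute to the Bousso--Engelhardt input.
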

\begin{proof}
We will begin by introducing a set of convenient coordinates.  Fix a coordinate chart on $\sigma_0$ for a neighborhood of $p_0$.  We denote these coordinates by $x^i$, $i\in \{1,\ldots D-2 \}$ and require that $p_0$ corresponds to the origin of $\bold{R}^{D-2}$.  Extend to coordinates $\{(x^i,r)\}$ on $H$ by following the integral curves of $h$ from $x^i$ by parameter $r$ to reach the point labeled by $(x^i,r)$.  Note that this point will lie in $\sigma_r$.  Finally, extend to coordinates  $\{(x^i,r,z)\}$ by starting from the point $(x^i,r)$ and following the integral curves of $k$ by affine parameter $z$.  Note that $H$ is the $z=0$ hypersurface.

Because $\alpha \neq 0$, we can put $l\big|_H = \frac{1}{\alpha} h - \frac{\beta}{\alpha} k$.  Thus, in the coordinates $(x^i,r,z)$ constructed above, we have
\begin{equation} 
\label{hlk_H}
\begin{split}
&h = (\bold{0},  1,  0) \\
& l \big|_{z=0} = (\bold{0},  \frac{1}{\alpha},  -\frac{\beta}{\alpha}) \\
& k \big|_{z=0} = (\bold{0},  0,  1) 
 \end{split}
\end{equation}
where $\bold{0}$ denotes $D-2$ zeros.  The curve $\gamma(r)$ also takes a simple form in our coordinates: because $f_r$ maps points in $\sigma_0$ to points in $\sigma_r$, we have
\begin{equation} 
\label{gamma_r}
\gamma(r) = (x^i(r), r, 0)
\end{equation}
where $x^i(r)$ is a curve in $\bold{R}^{D-2}$.  Our Lemma will be proven by showing that $\dot{x}^i(0)=0$.

Let $\xi_0(\lambda)$ and $\zeta_r(\lambda)$ denote, respectively, the geodesics generated by $l$ and $k$ from the points $\gamma(0) = (\bold{0},0,0)$ and $\gamma(r) = (x^i(r),r,0)$.  The zig-zag definition of $f_r$ implies that $\xi_0$ and $\zeta_r$ have an intersection: there exist functions $\lambda_1(r)$ and $\lambda_2(r)$ such that
\begin{equation}
\label{intersect}
\xi_0(\lambda_1(r)) = \zeta_r(\lambda_2(r)).
\end{equation}
Meanwhile, equation \ref{hlk_H} implies that
\begin{equation} 
\label{geodesic_series}
\begin{split}
\xi_0(\lambda_1(r)) & = \Big( \bold{0}, \frac{1}{\alpha_0} \lambda_1(r) , -\frac{\beta_0}{\alpha_0} \lambda_1(r) \Big) + O\big(\lambda_1(r)^2 \big)\\
\zeta_r(\lambda_2(r)) & = \Big(x^i(r), r , \lambda_2(r) \Big) + O\big(\lambda_2(r)^2 \big)
 \end{split}
\end{equation}
where $\alpha_0 = \alpha(r=0)$ and $\beta_0 = \beta(r=0)$. Comparing the $r$ and $z$ components of equation \ref{geodesic_series} now gives
\begin{equation} 
\label{lambda_series}
\begin{split}
& \lambda_1(r) = \alpha_0 \: r + O\big(\lambda_1(r)^2, \lambda_2(r)^2 \big) \\
& \lambda_2(r) = -\beta_0 \: r + O\big(\lambda_1(r)^2, \lambda_2(r)^2 \big)
\end{split}
\end{equation}
which then implies that
\begin{equation}
x^i(r) = O\big(\lambda_1(r)^2, \lambda_2(r)^2 \big) = O\big(r^2).
\end{equation}
We conclude that $\dot{x}^i(r=0) = 0$.
\end{proof}

\begin{theorem}
Let $A_0 \subset \sigma_0$ be a $D-2$ dimensional submanifold of $\sigma_0$ and define $A_r$ as the result of translating $A_0$ along the integral curves of $h$ by parameter $r$.  Then, $A_r$ has strictly increasing area.
\end{theorem}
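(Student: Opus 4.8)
The plan is to establish $\frac{d}{dr}\norm{A_r}>0$ at an arbitrary parameter value $r=r_1$; since $\norm{A_r}$ depends smoothly on $r$ ($h$-translation depends smoothly on its parameters) and $r_1$ is arbitrary, this yields strict monotonicity throughout. The device is to compare the $h$-translated family with a zig-zag family anchored at $\sigma_{r_1}$. Since the reference leaf in the construction of $f_r$, in Lemma~\ref{tangent_vec}, and in equation~\ref{fr_area} was arbitrary, we may rerun that construction with $\sigma_{r_1}$ in place of $\sigma_0$: for small $s>0$ this produces a zig-zag diffeomorphism $f^{(r_1)}_s:\sigma_{r_1}\to\sigma_{r_1+s}$, and setting $\tilde A_{r_1+s}:=f^{(r_1)}_s(A_{r_1})$, equation~\ref{fr_area} gives
\begin{equation*}
\frac{d}{ds}\Big|_{s=0}\norm{\tilde A_{r_1+s}}=\int_{A_{r_1}}\sqrt{g^{\sigma_{r_1}}}\:\alpha\:\theta^l,
\end{equation*}
which is strictly positive since $\alpha>0$ and $\theta^l>0$ everywhere on $H$ and $A_{r_1}$ has positive area.

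The core of the proof is the estimate $\norm{A_{r_1+s}}=\norm{\tilde A_{r_1+s}}+O(s^2)$. I would prove it in the coordinates $(x^i,s)$ built on $H$ as in the proof of Lemma~\ref{tangent_vec} (taking $\sigma_{r_1}$ as reference leaf), in which $h=\partial_s$, the leaf $\sigma_{r_1+s}$ is the level set $\{s=\mathrm{const}\}$, and the coordinate vectors $\partial_{x^i}$ are tangent to the leaves; then the induced metric on $\sigma_{r_1+s}$ is the $x$-block $g_{ij}(x,s)$ of the spacetime metric, the $h$-translate $A_{r_1+s}$ carries the same $x$-coordinates as $A_{r_1}$, and $\norm{A_{r_1+s}}=\int_{A_{r_1}}\sqrt{\det g_{ij}(x,s)}\,d^{D-2}x$. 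By Lemma~\ref{tangent_vec}, $f^{(r_1)}_s$ acts on leaf coordinates by $x^i\mapsto x^i+\varepsilon^i(x,s)$ with $\varepsilon^i(x,0)=\partial_s\varepsilon^i(x,0)=0$; since $\varepsilon^i$ is assembled from geodesic flows of smooth vector fields it is jointly smooth, so $\varepsilon^i=O(s^2)$ together with all of its $x$-derivatives, uniformly on $A_{r_1}$ (restricting to a relatively compact piece if $A_{r_1}$ is noncompact). Hence in
\begin{equation*}
\norm{\tilde A_{r_1+s}}=\int_{A_{r_1}}\sqrt{\det\!\big(g_{kl}(x+\varepsilon,s)\,(\delta^k_i+\partial_{x^i}\varepsilon^k)(\delta^l_j+\partial_{x^j}\varepsilon^l)\big)}\,d^{D-2}x
\end{equation*}
both the evaluation point of $g$ and the Jacobian factors differ from those of the $h$-translate by $O(s^2)$, so the integrands, and therefore the areas, differ by $O(s^2)$.

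Combining the two displays yields $\frac{d}{dr}\big|_{r=r_1}\norm{A_r}=\frac{d}{ds}\big|_{s=0}\norm{\tilde A_{r_1+s}}=\int_{A_{r_1}}\sqrt{g^{\sigma_{r_1}}}\,\alpha\,\theta^l>0$, and arbitrariness of $r_1$ finishes the proof. I expect the $O(s^2)$ comparison to be the main obstacle: Lemma~\ref{tangent_vec} records the coincidence of only a single tangent vector, and one must upgrade this to control of the entire first jet of $f^{(r_1)}_s$ along $A_{r_1}$ --- i.e.\ that the differentials, not merely the base points, of the zig-zag and $h$-translation maps agree to first order in $s$ --- uniformly enough that it survives integration against the area density. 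This is exactly the precise form of the statement that ``the failure of the zig-zag to follow the fibration is a second-order effect.''
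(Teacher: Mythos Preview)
Your proof is correct and follows the same overall strategy as the paper: compare $h$-translation with the zig-zag map, show the two agree to first order so their area derivatives coincide, invoke equation~\ref{fr_area} for strict positivity, and then shift the reference leaf to an arbitrary $r_1$. The difference lies in how the comparison step is executed. The paper bounds $\big|\,\norm{f_r(A_0)}-\norm{A_r}\,\big|$ by the area of the symmetric difference $f_r(A_0)\,\Delta\,A_r$ and argues from Lemma~\ref{tangent_vec} together with compactness of $\sigma_r$ that this area has vanishing $r$-derivative at $r=0$; this only needs to know where $f_r$ sends points near $\partial A_0$. You instead pull back the leaf area form under $f^{(r_1)}_s$ and compare integrands directly, which, as you correctly flag, requires control of the full first jet $\partial_{x^i}\varepsilon^k$ and not merely of $\varepsilon^i$ --- you recover this from Lemma~\ref{tangent_vec} by differentiating the identity $\partial_s\varepsilon^i(x,0)=0$ in $x$ and appealing to joint smoothness. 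Your route is more explicit about what must be estimated and delivers the closed formula $\frac{d}{dr}\norm{A_r}=\int_{A_r}\sqrt{g^{\sigma_r}}\,\alpha\,\theta^l$ in a single stroke; the paper's symmetric-difference bound is terser and sidesteps the Jacobian analysis, at the price of leaving the passage from Lemma~\ref{tangent_vec} to $\norm{f_r(A_0)\,\Delta\,A_r}=o(r)$ somewhat implicit.
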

\begin{proof}
Take $r \in [0,r_0)$.  We have
\begin{equation}
\label{sym_diff}
\Big| \norm{f_r(A_0)} - \norm{A_r} \Big| \leq \norm{f_r(A_0)  \: \Delta \: A_r}
\end{equation}
where $\Delta$ denotes the symmetric difference of sets: $A \Delta B = (A \setminus B) \cup (B \setminus A)$.  Now Lemma \ref{tangent_vec} and the compactness of $\sigma_r$ implies that
\[
\frac{d}{dr}\Big|_{r=0} \Big( \norm{f_r(A_0)  \: \Delta \: A_r} \Big)= 0.
\]
Noting that both sides of equation \ref{sym_diff} are nonnegative for all $r$ and are zero at $r=0$, we conclude that
\begin{equation}
\label{deriv_equal}
\frac{d}{dr}\Big|_{r=0} \Bigg( \Big| \norm{f_r(A_0)} - \norm{A_r} \Big| \Bigg) = 0.
\end{equation}
But equation \ref{fr_area} implies that $\norm{f_r(A_0)}$ is increasing at $r=0$ so we must have that $\norm{A_r}$ is also increasing at $r=0$.

While we have only proven that $A_r$ has increasing area at $r=0$, we can define a zig-zag function analogous to $f$ from any reference leaf and repeat all arguments above for any $r$.  Thus, we conclude that $A_r$ has strictly increasing area.  In fact, equations \ref{fr_area} and \ref{deriv_equal} show that
\[
\frac{d}{dr} \norm{A_r} = \int_{A_r} \sqrt{g^{\sigma_r}} \: \alpha \: \theta_l \: >0.
\]

\end{proof}

\noindent {\bf Acknowledgments \ } We are grateful to R. Bousso, M. Moosa, Y. Nomura, and Nico Salzetta for discussions.  The work of FS is supported in part by the DOE NNSA Stewardship Science Graduate Fellowship.  The work of SJW is supported in part by the BCTP Brantley-Tuttle Fellowship for which he would like to extend his gratitude to Lynn Brantley and Douglas Tuttle.

\end{document}